\newtheorem{lemma}{Lemma}
\newtheorem{theorem}[lemma]{Theorem}
\newtheorem{definition}[lemma]{Definition}
\title{On the Impossibility of Decomposing Binary Matroids }
\author{Marilena Leichter\thanks{Department of Mathematics, Technical University of Munich, Germany.  Supported in part by the Alexander von Humboldt Foundation with funds from the German Federal Ministry of Education and Research (BMBF) and by the Deutsche Forschungsgemeinschaft (DFG), GRK 2201.} 
\and
Benjamin Moseley \thanks{Tepper School of Business, Carnegie Mellon University, USA. Supported in part by a Google Research Award, an Infor Research Award, a Carnegie Bosch Junior Faculty Chair and NSF grants CCF-1824303,  CCF-1845146, CCF-1733873 and CMMI-1938909.}
\and Kirk Pruhs\thanks{Computer Science Department, University of Pittsburgh, USA. Supported in part  by NSF grants  CCF-1535755, CCF-1907673,  CCF-2036077 and an IBM Faculty Award.}
}
\newcommand{\cI}{{\cal I}}
\newcommand{\cn}{k} %coloring (cover) number
\newcommand{\gam}{M} %Gammoid set system
\providecommand{\keywords}[1]
{
  \small	
  \textbf{\textit{Keywords---}} #1
}
\begin{document}

\maketitle

\begin{abstract}
We show that there exist $k$-colorable matroids that are not $(b,c)$-decomposable when $b$ and $c$ are constants. A matroid is $(b,c)$-decomposable, if its ground set of elements can be partitioned into sets $X_1, X_2, \ldots, X_l$ with the following two properties. Each set $X_i$
has size at most $ck$.  Moreover, for all sets $Y$ such that $|Y \cap X_i| \leq 1$ it is the case that $Y$ is $b$-colorable. A $(b,c)$-decomposition is a strict generalization of a partition decomposition and, thus, our result refutes a conjecture from \cite{berczi2019list}.
\end{abstract}

\keywords{Matroid, Matroid Coloring, Matroid Decomposition, Matroid Intersection}

%%%%%%%%%%%%%%%%%%%%%%%%%%%%%%%%%%%%%%%%%%
\section{Introduction}
%%%%%%%%%%%%%%%%%%%%%%%%%%%%%%%%%%%%%%%%%%

Consider a matroid $M= (S, \mathcal{I})$ where $S$ is the ground set of elements and $\mathcal{I}$ is the collection of independent sets.  $M$ is said to be $k$-colorable if $S$ can be partitioned in $k$ sets $C_1, C_2, \ldots, C_k$ such that $C_i \in \mathcal{I}$ for all $i \in [k]$.  The smallest $k$ for which $M$ is $k$-colorable is known as the coloring number of the matroid $M$. An optimal coloring of a matroid can be computed in polynomial time \cite{Edmonds65}. 
This is not necessarily the case anymore if we consider, instead of a single matroid, the intersection of $h$ matroids. Consider a collection of $h$ matroids on the same ground set $M_i = (S, \mathcal{I}_i)$ for $i \in [h]$.  The intersection of $M_1, M_2, \ldots, M_h$ is said to be $k$-colorable if $S$ can be partitioned in $k$ sets $X_1, X_2, \ldots X_k$ such that $X_j \in \bigcap_{i=1}^h \mathcal{I}_i$ for all $j$.  That is, each $X_j$ is independent in all of the $h$ matroids. The coloring number of the intersection of $M_1, M_2, \ldots, M_h$ is the smallest $k$ for which the given intersection is $k$-colorable. Matroid intersection coloring is known to be NP-hard for $h\geq 3$ \cite{OBSZARSKI201748}.

%K\H onig's line coloring theorem can be interpreted as stating that the intersection of two $k$-colorable partition matroids is $k$-colorable, and furthermore there is an efficient algorithm to compute such a  coloring~\cite{KnigberGU,LovasvPlummer}. 

\cite{im2020matroid} showed that if each of the $k$-colorable matroids $M_1, \ldots, M_h$ is $(b, c)$-decomposable, the intersection of these matroids can be colored with $k \cdot h \cdot c \cdot b^h$
colors.
\begin{definition}[$(b,c)$-decomposable]

 A $k$-colorable matroid $\gam = (S, \mathcal{I})$ is $(b,c)$-decomposable if there is a partition  $X= \{X_1, X_2,\dots, X_\ell\}$ of $S$
     such that:
     \begin{itemize}
     \item  For all $i \in [\ell]$, it is the case that $|X_i| \leq c \cdot \cn$, and
     \item every set $Y =\{v_1, \ldots, v_{\ell}\}$, such that $v_i \in X_i$,  is $b$-colorable. 
     \end{itemize}
We refer to $X$ as a $(b,c)$-decomposition.
\end{definition}
If $b=1$ then $X = \{X_1, X_2,\dots,X_\ell \}$ represents a partition matroid,
and thus \cite{berczi2019list} called the $(1, c)$-decomposition a partition reduction.
Furthermore, \cite{im2020matroid} showed that if the $(b, c)$-partitions are given for a collection of matroids on the same ground set, or can be efficiently computed,
then the coloring of their intersection can be efficiently computed. 
Note that if  $h$, $b$ and $c$ are all $O(1)$ then
the resulting coloring is an $O(1)$-approximation to an optimal coloring as 
the coloring number for each individual matroid lower bounds the coloring number for
the intersection.

Furthermore, \cite{berczi2019list,im2020matroid,MarilenaGammoid} showed that many common types of 
matroids, including transversal matroids, laminar matroids, graphic matroids and gammoids,
have $(1, 2)$-decompositions. Moreover, they showed that these decompositions can be computed efficiently from the standard representations of these matroids. Thus \cite{berczi2019list} reasonably conjectured 
that every  matroid is $(1, 2)$-decomposable. If this conjecture held, and such decompositions could be found efficiently, then the result from \cite{im2020matroid} would yield an efficient $O(1)$-approximation algorithm for coloring the intersection of $O(1)$ arbitrary matroids.

This paper's main result is that there are matroids that are not $(O(1), O(1))$-decomposable. This refutes the conjecture from \cite{berczi2019list}. In particular, we show that the binary matroid, consisting of the $2^n -1 $ nonzero vectors of dimension $n$, is not  $(O(1), O(1))$-decomposable.

Before proving our main result in Section \ref{sect:main}, we review related work and basic definitions. 

 \subsection{Other Related  Work} 
 \cite{Aharoni06theintersection} showed that for two matroids $M_1$ and
$M_2$, with coloring numbers $k_1$ and $k_2$, the coloring number $k$ of $M_1 \cap M_2$ is at most $2 \max(k_1, k_2)$.
The proof in \cite{Aharoni06theintersection}  uses topological arguments  
that do  not  directly  give  an efficient algorithm  for  finding the coloring. 
\cite{berczi2019list} also showed how to use the existence of $(1, c)$-decompositions
to prove the existence of certain list colorings. 

Motivated by applications to the matroid secretary problem,
 \cite{Karlin} independently showed
 that the same binary matroid that we consider
is not $(1, O(1))$-decomposable. 

\subsection{Definitions}

A \emph{hereditary set system} is a pair $\gam = (S,\cI)$ where  $S$ is a universe of $n$ elements and $\cI\subseteq 2^S$ is a collection of subsets of $S$
with the property that if
  $A \subseteq B \subseteq S$ and $B \in \cI$ then $A \in \cI$.
 The sets in $\cI$ are called \emph{independent}. 
A subset $R$ of $S$ is \emph{$k$-colorable} if $R$ can be partitioned into $k$ independent sets. 
The \emph{coloring number} of  $\gam$ is the smallest $k$ such that $S$ is
 $k$-colorable.
   The \emph{rank} $r(X)$ of a subset $X$ of $S$ is the maximum cardinality of an
   independent subset of $X$.
  A \emph{matroid} is an hereditary set system with the additional properties that $\emptyset \in \mathcal{I}$ and if 
   $A  \in \cI $, $B \in \cI$, and $|A |  < |B|$ then
   there exists an $s \in B \setminus A$ such that $A \cup \{s\} \in \cI$.
   The intersection of matroids $(S, \cI_1), \ldots ,(S, \cI_h)$
   is a hereditary set system with universe $S$ where a set $I \subseteq S$ is 
   independent if and only if for all $i \in [1, h]$ it is the case
   that $I \in \cI_i$. 
    A \emph{flat} $F$ of $M$ is subset of $S$ such
   that for all elements $y \in S\setminus F$ it is the case that adding $y$ to
   $F$ strictly increases the rank.

\section{Main Result: Binary Matroids are Not Decomposable}
\label{sect:main}

This section focuses on showing that binary matroids are not $(b,c)$-decomposable for constants $b$ and $c$.

\begin{definition}
Let $\gam = (S,\cI)$ be the binary matroid where $S$ consists of all $n$ dimensional vectors
with entries that are either 0 or 1, with the exception of the all zero vector. 
 A subset $R$ of $S$ is independent if and only if
the elements of $R$ are linearly independent over the field 
with the elements 0 and 1 with addition and multiplication modulo 2. 
\end{definition}
 
Note that $S$ contains $2^n -1 $ elements and has rank $n$. 

\begin{lemma}
\label{lem:colorflat}
The coloring number of any rank $d$ flat of $\gam$ is  $\lceil  (2^d - 1)/d \rceil$.
Thus, by taking $d=n$, the
coloring number $k$ of $\gam$ is precisely $\lceil  2^n/n \rceil$.
\end{lemma}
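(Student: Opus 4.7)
The plan is to prove both inequalities separately, with the lower bound being essentially immediate and the upper bound following from Edmonds' matroid partition theorem.

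First, I would observe that a rank $d$ flat of $\gam$ is exactly the set of all nonzero vectors lying in some $d$-dimensional subspace $V$ of $\mathbb{F}_2^n$, so it contains exactly $2^d - 1$ elements and, viewed as a matroid in its own right, is isomorphic to the full binary matroid of dimension $d$. For the lower bound, any independent (i.e., $\mathbb{F}_2$-linearly independent) subset of $V$ has at most $d$ elements, so any partition of the $2^d - 1$ nonzero vectors into independent sets requires at least $\lceil (2^d - 1)/d \rceil$ parts.

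For the upper bound, I would apply Edmonds' matroid partition theorem, which states that a loopless matroid on ground set $E$ can be partitioned into $k$ independent sets iff $|X| \leq k \cdot r(X)$ for every nonempty $X \subseteq E$; equivalently, the coloring number is $\max_{\emptyset \neq X \subseteq E} \lceil |X|/r(X) \rceil$. Taking $E$ to be the rank $d$ flat, I would show that for every nonempty $X \subseteq E$,
\[
\frac{|X|}{r(X)} \;\leq\; \frac{2^d - 1}{d}.
\]
The key observation is that $X$ spans a subspace of dimension exactly $r(X)$, which contains at most $2^{r(X)} - 1$ nonzero vectors, so $|X| \leq 2^{r(X)} - 1$ and thus $|X|/r(X) \leq (2^{r(X)} - 1)/r(X)$. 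It then suffices to verify that the function $f(t) = (2^t - 1)/t$ is nondecreasing on the positive integers, so that $r(X) \leq d$ yields $f(r(X)) \leq f(d)$; this monotonicity is a short calculation showing $f(t+1) \geq f(t)$, equivalent to $t \cdot 2^{t+1} - t \geq (t+1) 2^t - (t+1)$, i.e., $(t-1) 2^t \geq -1$, which is trivially true for $t \geq 1$.

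Combining, $\lceil |X|/r(X)\rceil \leq \lceil (2^d - 1)/d \rceil$ for every nonempty $X$, the maximum being attained at $X = E$. Edmonds' theorem then delivers the matching upper bound, so the coloring number equals $\lceil (2^d - 1)/d \rceil$. The statement about $\gam$ itself follows by specializing to $d = n$, noting $\lceil (2^n - 1)/n \rceil = \lceil 2^n/n \rceil$ since $(2^n - 1)/n$ and $2^n/n$ differ by $1/n < 1$ and $n \nmid 2^n - 1$ for $n \geq 2$ (while for $n=1$ both equal $1$). I do not anticipate a real obstacle; the only mild step is the monotonicity of $f(t)$, but it reduces to a one-line inequality.
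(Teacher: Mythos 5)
Your proposal is correct and follows essentially the same route as the paper: both invoke Edmonds' matroid partition theorem and identify the extremal ratio $|X|/r(X)$ at the full flat, with your write-up additionally supplying the monotonicity argument for $(2^t-1)/t$ that the paper asserts without proof. (One trivial nit: for $n=1$ we have $\lceil 2^n/n\rceil = 2$, not $1$, so the final equality genuinely needs $n\ge 2$ --- a degenerate case the paper also ignores.)
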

\begin{proof}
It is well known that a matroid can be colored with $k$ colors if and only if for
every subset $R$ of elements, $k \cdot r(R) \ge |R|$, that is, $k$ times the rank of $R$ is at least the cardinality of $R$~\cite{Edmonds65}. 
The maximum value of  $|R|/r(R)$ over subsets $R$ of a rank $d$ flat $F$   occurs when $R = F$. Thus this maximum is $(2^d - 1)/d$. 
\end{proof}

\begin{lemma}
\label{lem:flatcount}
If $d \le n/2$ then the number of distinct rank $d$ flats of $\gam$ is at least
$ \frac{2^{dn}}{2^{d^2+d}}$.  
\end{lemma}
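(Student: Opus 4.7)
The plan is to identify the rank $d$ flats of $\gam$ with $d$-dimensional linear subspaces of $\mathbb{F}_2^n$, and then count such subspaces with a standard Gaussian-binomial-style argument. A flat in the binary matroid is a set closed under the matroid closure operator, which here is exactly linear span over $\mathbb{F}_2$; since $\gam$ consists of the nonzero vectors, a rank $d$ flat is precisely the set of nonzero vectors of some $d$-dimensional subspace of $\mathbb{F}_2^n$. So it suffices to lower bound the number of $d$-dimensional subspaces of $\mathbb{F}_2^n$.

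Next I would count ordered $d$-tuples of linearly independent vectors in $\mathbb{F}_2^n$ by the usual one-by-one argument: the first vector can be any of $2^n-1$ nonzero vectors, and after choosing $i$ linearly independent vectors there are $2^n - 2^i$ vectors outside their span, giving a total of $\prod_{i=0}^{d-1}(2^n - 2^i)$. The same count applied inside a fixed $d$-dimensional subspace shows that each such subspace contains exactly $\prod_{i=0}^{d-1}(2^d - 2^i)$ ordered bases. Since every ordered linearly independent $d$-tuple spans a unique $d$-dimensional subspace, the number of distinct rank $d$ flats equals the ratio
\[
\frac{\prod_{i=0}^{d-1}(2^n - 2^i)}{\prod_{i=0}^{d-1}(2^d - 2^i)}.
\]

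To produce the stated bound $2^{dn}/2^{d^2+d}$, I would bound the two products separately. For the numerator, for each $i\in\{0,\dots,d-1\}$ we have $2^i \le 2^{d-1} \le 2^{n-1}$ (using $d\le n/2 \le n$), so $2^n - 2^i \ge 2^{n-1}$, giving numerator at least $2^{d(n-1)}$. For the denominator, the crude bound $2^d - 2^i \le 2^d$ gives denominator at most $2^{d^2}$. Dividing yields at least $2^{d(n-1) - d^2} = 2^{dn}/2^{d^2+d}$, as required.

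The argument is essentially routine once the flat/subspace identification is made, so I do not expect a real obstacle; the only small care needed is to ensure the inequalities $2^n - 2^i \ge 2^{n-1}$ and $2^d - 2^i \le 2^d$ are simultaneously valid across the full range of $i$, which the hypothesis $d\le n/2$ comfortably guarantees (in fact $d\le n$ would already suffice for this particular bound).
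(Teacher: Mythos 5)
Your proposal is correct and follows essentially the same route as the paper: count (ordered, rather than unordered, but the $d!$ cancels) bases of $\mathbb{F}_2^n$ and of a fixed $d$-dimensional subspace, take the ratio $\prod_{i=0}^{d-1}\frac{2^n-2^i}{2^d-2^i}$, and bound numerator terms below by $2^{n-1}$ and denominator terms above by $2^d$. Your observation that $d\le n$ already suffices for this bound is accurate (the paper's hypothesis $d\le n/2$ is stronger than needed here), and your explicit justification of the flat/subspace correspondence is a detail the paper leaves implicit.
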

\begin{proof}
Consider the process of picking one by one a collection of $d$ vectors to form a basis of 
a rank $d$ flat $F$.
When considering the $i$th choice, there are $(2^n -1 ) - (2^{i-1} -1 )$ choices of elements of 
$S$ that are linearly independent from the previous choices. As the order of the
$d$ vectors chosen  does not matter, the number possible collections
of elements that form a basis of rank $d$ flat is the following.  
$$\frac{\prod_{i=1}^d \left((2^n -1 ) - (2^{i-1} -1 )\right)}{d!}$$
Similarly for a particular rank $d$ flat $F$ there are 
$$\frac{\prod_{i=1}^d \left((2^d -1 ) - (2^{i-1} -1 )\right)}{d!}$$
collections of elements from $F$ that form a basis for $F$.
Thus there are 
$$\frac{\prod_{i=1}^d \left((2^n -1 ) - (2^{i-1} -1 )\right)}{\prod_{i=1}^d \left((2^d -1 ) - (2^{i-1} -1 )\right)} = \prod_{i=1}^d \left(\frac{2^n - 2^{i-1}}{2^d -2^{i-1}}\right)$$
flats of rank $d$.
Lower bounding each term 
in the product in the numerator 
by $2^n - 2^d$, and upper bounding each term in the product
in the denominator by $2^d$, we can conclude that
there are at least
$$\left(\frac{2^n - 2^d}{2^d} \right)^d $$
flats of rank $d$. Then if 
$d \le n/2$, this is at least  
$\frac{2^{dn}}{2^{d^2+d}}$. 
\end{proof}

\begin{theorem}  
If $\gam$ admits a $(b, c)$-decomposition then it must be the case that $4 c^2 2^{d^2 + d} \ge n$, where $d$ is the minimum integer
such that $(2^d -1)/d > b$. In particular, 
for sufficiently large $n$, $M$ admits
no $(O(1), O(1))$-decomposition.
 \end{theorem}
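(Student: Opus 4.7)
The strategy is a double-counting argument on ``bad pairs'': pairs of elements that lie in the same block $X_i$ of the hypothetical $(b,c)$-decomposition. The choice of $d$ is the key leverage. By Lemma \ref{lem:colorflat}, every rank $d$ flat $F$ of $\gam$ has coloring number $\lceil (2^d - 1)/d \rceil > b$, and so $F$ is not $b$-colorable. The definition of a $(b,c)$-decomposition forces every (partial) transversal -- a set meeting each $X_i$ in at most one element -- to be $b$-colorable; hence $F$ cannot be a partial transversal, and there must exist distinct $u,v \in F$ with $u,v \in X_i$ for some $i$. Call such a pair bad. The core plan is therefore: every rank $d$ flat witnesses at least one bad pair, while the decomposition is too small to supply bad pairs for all of them.

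To make this precise, let $P = \sum_i \binom{|X_i|}{2}$ be the total number of bad pairs, let $T_d$ be the maximum, over linearly independent pairs $\{u,v\}$, of the number of rank $d$ flats containing both, and let $N_d$ be the total number of rank $d$ flats. Since every rank $d$ flat contains at least one bad pair, summing over flats gives $N_d \le P \cdot T_d$.

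The two remaining ingredients are concrete estimates for $P$ and $T_d$. For $P$, I would use $|X_i| \le c\cn$ together with $\sum_i |X_i| = 2^n - 1$ to get $P \le \tfrac{1}{2}(c\cn)(2^n - 1)$, and then absorb $\cn = \lceil 2^n / n \rceil$ into a clean bound of the form $O(c \cdot 2^{2n}/n)$. For $T_d$, I would mimic Lemma \ref{lem:flatcount}: extending a fixed independent pair $\{u,v\}$ to an ordered basis of a rank $d$ flat admits $\prod_{i=3}^{d}(2^n - 2^{i-1})$ choices, and each rank $d$ flat containing $\{u,v\}$ supplies $\prod_{i=3}^{d}(2^d - 2^{i-1})$ such extensions, giving a closed form that I bound by replacing numerator factors by $2^n$ and denominator factors by their minimum $2^{d-1}$.

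Combining $N_d \le P \cdot T_d$ with Lemma \ref{lem:flatcount}'s lower bound $N_d \ge 2^{dn}/2^{d^2+d}$ produces an inequality whose $n$-dependence can be isolated. The main obstacle is not conceptual but bookkeeping: I need to absorb ceilings, the $-1$ in $2^n - 1$, and the slack in the $T_d$ estimate into the constants so that the resulting inequality matches the stated form $4c^2 \cdot 2^{d^2+d} \ge n$; since my rough calculation suggests a tighter bound is available, a lenient pass through the constants should suffice. The final clause of the theorem is immediate: when $b$ and $c$ are constants, $d$ is a constant and hence so is $4c^2 \cdot 2^{d^2+d}$, so for sufficiently large $n$ no $(b,c)$-decomposition can exist.
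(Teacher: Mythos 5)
Your proposal is correct and follows essentially the same double-counting argument as the paper: rank-$d$ flats are not $b$-colorable, so each must contain a pair lying in a common part, and comparing the number of such within-part pairs (times the number of flats through a pair) against the flat count from Lemma \ref{lem:flatcount} yields the bound. Your bookkeeping is in fact slightly sharper than the paper's (bounding $\sum_i \binom{|X_i|}{2}$ directly rather than via $\ell \le n$, and using the exact ratio for the number of flats through a fixed independent pair), so the lenient pass through the constants you describe does go through.
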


\begin{proof}
Consider an arbitrary $(b, c)$-decomposition 
$X = \{X_1, X_2,\dots,X_\ell \}$ of $\gam$.
As $(2^d -1)/d > b$, a flat of rank $d$ is not $b$-colorable by Lemma \ref{lem:colorflat} . 
Thus for each rank $d$ flat $F$, 
at least two elements of $F$ must be in the same part in $X$. Otherwise, we get a contradiction to
the definition of $(b, c)$-decomposability. To see this, consider setting $Y$ to $F$ in the definition of the $(b,c)$-decomposition.  That is, each element of $F$ is selected to be in $Y$ as this includes at most one element in any part $X_i$. The resulting representatives would
not  be $b$-colorable by the above characterization of $F$. 
If two elements of a  rank $d$ flat $F$ are in the same part $X_i \in X$ then
we say that $F$ is covered by part $X_i$.

Since $X$ is a $(b, c)$-decomposition, the cardinality
of each part of $X$ is at most $ck$. 
Each pair of elements  $x, y$ in a part  $ X_i \in X$ can be contained in at
most ${2^n \choose d-2}$ rank $d$ flats. To see  this note that each rank $d$ flat $F$ can be represented by $d$ independent basis vectors in $F$, and since $x$ and $y$ are already specified, there are at most $d-2$ more choices for these basis vectors.
There are at most ${ck \choose 2}$ possible pairs of elements 
from a part  $X_i \in X$, and $X_i$ can cover at most ${ck \choose 2} {2^n \choose d-2}$ different flats.
Thus in aggregate, all the parts of $X$ can cover at most 
$\ell {ck \choose 2} {2^n \choose d-2}$ flats.
Then using the fact that $\ell$ is at most $  n $, 
$k $ is at most $2 \cdot 2^n / n$, and upper bounding
${x \choose y}$ by $x^y$, 
we can conclude that in aggregate all the parts of $X$ can cover at most 
$\ell {ck \choose 2} {2^n \choose d-2} \leq n (ck)^2 (2^n)^{d-2}\leq 4 c^2 2^{n d}/n$   flats. 
Since each of the flats must be covered by some part of $X$, and since 
by Lemma \ref{lem:flatcount} the number of rank $d$ flats is at least
$\frac{2^{nd}}{2^{d^2+d}}$, 
it must be the case that
$$4c^2 2^{n d}/n \ge \frac{2^{nd}}{2^{d^2+d}}$$
or equivalently $4 c^2 2^{d^2 + d} \ge n$.
\end{proof}

\bigskip
\noindent
{\bf Acknowledgements:} We thank James Oxley for helpful discussions. 

\bibliographystyle{alpha}

\bibliography{bibgammoid}

\end{document}